\documentclass[aps,prl,reprint,superscriptaddress,footinbib,longbibliography]{revtex4-2}

\usepackage[T1]{fontenc}
\usepackage{amsmath,mathtools}
\usepackage{amsthm}
\usepackage{newtxtext,newtxmath}
\usepackage{microtype}
\usepackage{amsmath,mathtools}
\usepackage{graphicx}
\usepackage{booktabs}
\usepackage{amsthm}
\usepackage{xcolor}
\usepackage[normalem]{ulem}
\usepackage[colorlinks=true,citecolor=blue!55!black,urlcolor=blue!55!black,linkcolor=blue!55!black]{hyperref}
\hypersetup{
  pdftitle={Bosonic quantum communication beyond the thermal threshold},
  pdfauthor={Francesco Anna Mele, Giuseppe Catalano, Marco Fanizza, Vittorio Giovannetti, Ludovico Lami}
}

\newcommand{\Tr}{\operatorname{Tr}}
\newcommand{\ket}[1]{\lvert #1\rangle}
\newcommand{\bra}[1]{\langle #1\rvert}
\newcommand{\proj}[1]{\ket{#1}\!\bra{#1}}
\newcommand{\Ic}{I_{\mathrm c}}
\newcommand{\T}{\mathcal T}
\newcommand{\Gset}{\mathsf G_1}
\newcommand{\gfun}{g}
\newcommand{\Hset}{\mathbb H}
\newtheorem{theorem}{Theorem}
\newtheorem{lemma}{Lemma}
\begin{document}

\title{Bosonic quantum communication beyond the thermal threshold}

\author{Francesco Anna Mele}
\email{francesco.mele@sns.it}
\affiliation{Scuola Normale Superiore, Piazza dei Cavalieri 7, 56126 Pisa, Italy}

\author{Giuseppe Catalano}
\email{giuseppe.catalano@sns.it}
\affiliation{Scuola Normale Superiore, Piazza dei Cavalieri 7, 56126 Pisa, Italy}

\author{Marco Fanizza}
\email{marco.fanizza@inria.fr}
\affiliation{Inria, T\'el\'ecom Paris--LTCI, Institut Polytechnique de Paris, Palaiseau, France}

\author{Vittorio Giovannetti}
\email{vittorio.giovannetti@sns.it}
\affiliation{Scuola Normale Superiore, Piazza dei Cavalieri 7, 56126 Pisa, Italy}

\author{Ludovico Lami}
\email{ludovico.lami@sns.it}
\affiliation{Scuola Normale Superiore, Piazza dei Cavalieri 7, 56126 Pisa, Italy}
\begin{abstract}
The quantum capacity of the bosonic thermal attenuator, which is given by the regularization of its coherent information, is unknown. The seminal work of Holevo and Werner established in 1999 the standard one-use lower bound obtained from input thermal states. We first prove that this long-standing lower bound is the exact supremum over all single-mode Gaussian states and then show that, crucially, a non-Gaussian state can do better. As a consequence, we prove positivity of the quantum capacity in a parameter region where the channel is not antidegradable, yet its coherent information optimized over single-mode Gaussian states vanishes. For example, with one thermal photon in the environment and at transmissivity $\eta=0.8$, the coherent information is non-positive for every single-mode Gaussian input. We give an explicit rank-two non-Gaussian state, supported on only six Fock levels, whose coherent information is certified to be at least $4.7\times10^{-4}$ qubits per channel use. This short witness is far from numerically optimal: a numerical optimization over fixed non-Gaussian families reaches at least $8.4\times 10^{-3}$ qubits per channel use at the same point. More generally, at $\nu=1$, using non-Gaussian inputs we certify positivity of the coherent information, and therefore of the quantum capacity, down to $\eta=0.7841$; by contrast, the channel is antidegradable, and hence has zero quantum capacity, for $\eta\leq0.75$. Overall, our work identifies new high-noise regimes in which bosonic quantum communication is possible.
\end{abstract}

\maketitle

\paragraph{Introduction.—}
The quantum capacity $Q(\mathcal N)$ is the largest asymptotic rate at which a noisy channel $\mathcal N$ can transmit unknown quantum states with vanishing error~\cite{WildeBook2017,KhatriWilde2024}.  By the Lloyd--Shor--Devetak theorem it is the regularized coherent information,
\begin{align}
 Q(\mathcal N)&=\lim_{n\to\infty}\frac1n\sup_{\rho_n}
 \Ic(\rho_n,\mathcal N^{\otimes n}),\label{eq:capacity}\\
 \Ic(\rho,\mathcal N)&=S(\mathcal N(\rho))-S(\mathcal N^{\rm c}(\rho)),
\end{align}
where $\mathcal N^{\rm c}$ is a complementary channel~\cite{Lloyd1997,DevetakShor2005,Devetak2005}.   The regularization makes $Q$ difficult to evaluate even in finite dimension.  For bosonic channels there is the additional complication of an infinite-dimensional Hilbert space~\cite{Serafini2017,Weedbrook2012}.

The thermal attenuator $\Phi_{\eta,\nu}$ is the standard model of photon loss in the presence of thermal noise~\cite{Serafini2017}. It describes the interaction of an input bosonic mode $A$ with an environment bosonic mode $E$, prepared in a thermal state with mean photon number $\nu$, through a beam splitter of transmissivity $0\leq\eta\leq1$. In the Heisenberg picture, the annihilation operator $\hat b$ of the output mode $B$ is related to the annihilation operator $\hat a$ of the input mode and the annihilation operator $\hat e$ of the environment mode by
\begin{equation}
\hat b
=
\sqrt{\eta}\,\hat a
+
\sqrt{1-\eta}\,\hat e,
\qquad
\operatorname{Tr}\!\left[
\hat e^\dagger\hat e\,\tau_\nu
\right]
=
\nu,
\label{eq:channel}
\end{equation}
where $\tau_\nu$ denotes the thermal state of the environment mode.
Holevo and Werner evaluated the coherent
information generated by thermal inputs to phase-insensitive
single-mode Gaussian channels and obtained the standard thermal-state
lower bound on the quantum capacity of the thermal attenuator~\cite{HolevoWerner2001}. The quantum capacity of the pure-loss channel $\Phi_{\eta,\nu=0}$ was later determined exactly~\cite{Wolf2007}. This result relies on the degradability of the channel, which reduces its regularized coherent information to a single-letter quantity. More generally, the degradability properties of single-mode bosonic Gaussian channels were characterized in Refs.~\cite{CarusoGiovannetti2006,CarusoGiovannettiHolevo2006,Holevo2007,Wolf2007,CarusoEisert2008}.  At nonzero temperature (i.e.~$\nu>0$), however, the thermal attenuator is generally nondegradable and only lower and upper bounds are known on the quantum capacity~\cite{Bradler2015,Sharma2018,WildeQi2018,Rosati2018,LimUpper2019,NohAlbertJiang2019,NohPirandolaJiang2020,Fanizza2021,Kianvash2024}. The two-way assisted quantum capacity and secret-key capacity are known exactly for the pure-loss channel~\cite{Pirandola2017}, while for thermal attenuators with $\nu>0$ only upper and lower bounds are known~\cite{Pirandola2017,PirandolaGarciaPatronBraunsteinLloyd2009,GarciaPatronPirandolaLloydShapiro2009,TakeokaGuhaWilde2014Squashed,Takeoka2014,WildeTomamichelBerta2017,KaurWilde2017,DavisShirokovWilde2018,PirandolaChannelSimulation2018,LaurenzaTserkisBanchiBraunsteinRalphPirandola2019,MeleLamiGiovannetti2025,OrtolanoPirandolaBanchi2025}.  Related work has established activation and superactivation phenomena for noisy Gaussian channels~\cite{SmithSmolinYard2011,LimActivation2019}, as well as striking capacity effects arising from memory effects and non-Gaussian mixtures of lossy channels~\cite{LupoGiovannettiMancini2010,Lami2020,MeleLamiGiovannetti2022PRL,MeleLamiGiovannetti2022PRA,MeleDePalmaFanizzaGiovannettiLami2024,PirandolaEnvironment2021,CatalanoFanizzaMeleDePalmaGiovannetti2026}.  Finite-blocklength achievable rates have also been developed for bosonic Gaussian channels~\cite{WildeRenesGuha2016,MeleBarbarinoGiovannettiFanizza2025,MeleBarbarinoGiovannettiFanizzaMemory2025}.  Quantum-capacity problems for other bosonic continuous-variable noise models, including dephasing, combined loss--dephasing, random displacement, and continuous-variable erasure, have also been investigated~\cite{ArqandMemarzadehMancini2020,LeviantXuJiangRosenblum2022,LamiWilde2023,ZhongOhJiang2023,MeleSalekGiovannettiLami2024,LinNoh2024}. Explicit GKP families have also been shown to attain the quantum capacity of certain pure-loss channels~\cite{ZhengHeLeeNohJiang2025}.

In contrast to several central optimization problems for bosonic Gaussian channels, where Gaussian inputs are known to be optimal~\cite{Wolf2007,Giovannetti2004,GiovannettiHolevoGarciaPatron2015,DePalma2017}, no analogous Gaussian-optimizer theorem was known for the one-use coherent information of a thermal attenuator. Holevo and Werner computed the coherent information for thermal inputs and derived its infinite-energy limit~\cite{HolevoWerner2001}. This has provided the best-known lower bound on the quantum capacity of the thermal attenuator since 1999~\cite{HolevoWerner2001}, but left open the question of whether this lower bound could be improved. Here, we show that it can.

More precisely, the seminal work in Ref.~\cite{HolevoWerner2001} left the community with two more fine-grained open questions: can squeezing improve the coherent information within the set of single-mode Gaussian states (see Remark 4 in~\cite{WildeQi2018}), and can a non-Gaussian input perform even better? The latter question can be stated as follows:
\begin{center}
\emph{Do single-mode Gaussian inputs maximize the coherent information of the thermal attenuator $\Phi_{\eta,\nu}$?}
\end{center}
This problem is linked to a complementary question. The best-known lower bound on the quantum capacity~\cite{HolevoWerner2001} becomes positive only above the \emph{thermal threshold}
\begin{align}
 \eta_{\rm G}(\nu)
 &\coloneqq \frac{2^{\gfun(\nu)}}{1+2^{\gfun(\nu)}},
 \label{eq:gaussian-threshold}
\end{align}
where
$\gfun(\nu)
\coloneqq
(\nu+1)\log_2(\nu+1)-\nu\log_2\nu$. On the other hand, the thermal attenuator is antidegradable, and hence has zero quantum capacity, for $\eta\leq\eta_{\rm AD}(\nu)$~\cite{LamiKhatriAdessoWilde2019}, where
\begin{align}
 \eta_{\rm AD}(\nu)
 &\coloneqq \frac{\nu+\frac12}{\nu+1}.
 \label{eq:antidegradability-threshold}
\end{align}
Thus, in the intermediate region
$\eta_{\rm AD}(\nu)<\eta\leq\eta_{\rm G}(\nu)$, the thermal-state coherent information vanishes, although antidegradability no longer forces the quantum capacity to be zero. This leads to the complementary question:
\begin{center}
 \emph{Is $Q(\Phi_{\eta,\nu})>0$ in this intermediate region?}
\end{center}
For $\nu=1$, the two thresholds are $\eta_{\rm AD}(1)=3/4$ and $\eta_{\rm G}(1)=4/5$. A non-Gaussian input with positive coherent information in this interval would therefore answer both questions at once: it would disprove Gaussian optimality and establish quantum communication below the thermal threshold.

More formally, consider whether
\begin{equation}
 \sup_{\rho}\Ic(\rho,\Phi_{\eta,\nu})
 \stackrel{?}{=}
 \sup_{\rho\in\Gset}\Ic(\rho,\Phi_{\eta,\nu})\,,
 \label{eq:conjecture}
\end{equation}
where $\Gset$ denotes the set of single-mode Gaussian states. The right-hand side is the best value achievable with Gaussian inputs, whereas the left-hand side allows arbitrary input states. We first prove that the Gaussian optimization reduces to thermal inputs, so that the exact Gaussian optimum coincides with the Holevo--Werner thermal-state lower bound~\cite{HolevoWerner2001}. We then construct a non-Gaussian input with strictly larger coherent information, thereby disproving Eq.~\eqref{eq:conjecture} and answering both questions above.

\paragraph{Thermal-state lower bound and its optimality among Gaussian states.—}
The exact evaluation of the optimal coherent information over single-mode
Gaussian states in Eq.~\eqref{eq:conjecture} combines three ingredients.
Holevo and Werner computed the coherent information for thermal input
states and derived its infinite-energy limit~\cite{HolevoWerner2001}.
Br\'adler subsequently proved that the supremum over thermal states is
approached in the infinite-energy limit~\cite{Bradler2015}. In the
Supplemental Material (SM), we prove the remaining step: arbitrary single-mode
Gaussian inputs, including squeezed states, cannot outperform thermal
states. These results yield the following lemma.

\begin{lemma}[Optimality of the thermal state]
\label{lem:gaussianbenchmark}
For every $0<\eta<1$, the supremum of the coherent information of the thermal
attenuator over all single-mode Gaussian input states is achieved by thermal
states in the infinite-energy limit:
\begin{align}
 \sup_{\rho\in\Gset}\Ic(\rho,\Phi_{\eta,\nu})
 &=
 \left[
 \lim_{N\to\infty}\Ic(\tau_N,\Phi_{\eta,\nu})
 \right]_+
 \notag\\
 &=
 \left[
 \log_2\frac{\eta}{1-\eta}
 -
 \gfun(\nu)
 \right]_+,
 \label{eq:gaussianbenchmark}
\end{align}
where $\tau_N$ is the thermal state with mean photon number $N$, and
$[y]_+\coloneqq\max\{y,0\}$.
\end{lemma}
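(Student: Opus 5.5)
The proof has three steps. First, reduce arbitrary single-mode Gaussian inputs to centered states with diagonal covariance matrix. Second, show that squeezing never helps, so the problem reduces to thermal inputs. Third, use the known thermal-state analysis of Holevo--Werner and Br\'adler.

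\emph{Reduction.} Displacements commute covariantly with $\Phi_{\eta,\nu}$: a displacement at the input becomes a displacement at the output and one at the environment output. Entropies are unitarily invariant, so $\Ic$ depends only on the covariance matrix $V$. The channel is also phase covariant, so we may rotate $V$ to the form $V=\mathrm{diag}(x,y)$ with $xy\ge 1/4$ (vacuum variance $1/2$). Both $B$ and the complementary output $E'$ are then Gaussian with diagonal covariance matrices
\begin{align*}
V_B&=\mathrm{diag}\big(\eta x+(1-\eta)c,\ \eta y+(1-\eta)c\big),\\
V_{E'}&=\mathrm{diag}\big((1-\eta)x+\eta c,\ (1-\eta)y+\eta c\big),
\end{align*}
where $c=\nu+1/2$. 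Here I use the Stinespring dilation in which the environment thermal state is purified, so the complementary output is the reflected mode with the purifying reference; equivalently $S(\mathcal N^{\rm c}(\rho))=S(\rho_{BE'\text{-joint}})$. To avoid dealing with the purification, I would write $\Ic=S(B)-S(RB)$ with $R$ the reference of a purified input. Let $f(\lambda)=(\lambda+\tfrac12)\log_2(\lambda+\tfrac12)-(\lambda-\tfrac12)\log_2(\lambda-\tfrac12)$ denote the entropy as a function of the symplectic eigenvalue. Then
\[
\Ic=f\big(\sqrt{\det V_B}\big)-f(d_1)-f(d_2)+f(c),
\]
where $d_{1,2}$ are the symplectic eigenvalues of the two-mode complementary state (environment output plus environment purification).

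\emph{Squeezing does not help.} Fix the total energy $x+y$ and ask whether the symmetric point $x=y$ maximizes $\Ic$. An easier route is a concavity argument. Phase averaging over the rotations $e^{i\theta \hat n}$ maps any centered Gaussian state with diagonal $V$ into a non-Gaussian mixture, so direct averaging does not stay Gaussian. Instead, I would exploit that squeezing $x\to sx$, $y\to y/s$ at fixed $\det V$ only increases the symplectic mismatch. I would show that $\Ic$, as a function of $t=\log s$, is even and has nonpositive derivative for $t>0$. An alternative that may be cleaner: whenever $\Ic>0$, compare with the thermal state of the same determinant, or use the fact that $\Ic(\rho)\le\Ic(\tau_N)$ is guaranteed once we show $\sup_{\text{squeezed}}\Ic\le\lim_{N\to\infty}\Ic(\tau_N)$. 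For the latter, one can let one quadrature variance go to infinity. The cleanest plan is to bound directly
\[
\Ic(V)\le \tfrac12\Big[\Ic(\tau_{x-1/2})+\Ic(\tau_{y-1/2})\Big]\ \text{(in a suitable sense)}\le \sup_N\Ic(\tau_N),
\]
via explicit symplectic eigenvalue formulas. The key formulas are $d_1d_2=\sqrt{\det V_{\text{2-mode}}}$ and monotonicity/concavity of $f$ in $\log$ of its argument. This step is the main obstacle. I expect to need the explicit closed forms of $d_{1,2}$ and an inequality of the form that $f(\sqrt{ab})$ is concave-like in $(\log a,\log b)$. If that fails, I would fall back on numerical-analytic monotonicity in $t$ verified by derivative sign analysis.

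\emph{Thermal optimum.} For thermal inputs, Holevo--Werner give
\[
\Ic(\tau_N,\Phi_{\eta,\nu})=g(\eta N+(1-\eta)\nu)-g\big(\tfrac{D+(1-\eta)(N-\nu)-1}{2}\big)-g\big(\tfrac{D-(1-\eta)(N-\nu)-1}{2}\big),
\]
with $D$ the appropriate square root. Br\'adler proved this quantity is increasing in $N$. Expanding the limit $N\to\infty$ using $g(x)=\log_2(ex)+o(1)$ gives $\log_2\frac{\eta}{1-\eta}-g(\nu)$. Finally, the vacuum input gives $\Ic=0$ and is Gaussian, which accounts for the positive part $[\cdot]_+$. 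Combining the three steps yields Lemma~\ref{lem:gaussianbenchmark}.
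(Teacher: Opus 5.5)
\paragraph{Gap: Step~2 is unproven, and its stated routes are false.}
You flag Step~2 as the main obstacle yourself, and the routes you state precisely do not hold. Use the convention where the vacuum covariance matrix is $I_2$, and write $V_A=tS_r^2$ with $S_r=\mathrm{diag}(e^r,e^{-r})$ and $t\ge1$. Fix the symplectic eigenvalue $t>1$ and let $r\to\infty$. The output symplectic eigenvalue $\beta$ tends to $\infty$. The two symplectic eigenvalues of the reference--output state behave like $\beta+O(1/\beta)$ and $1+o(1)$, so $\Ic\to0$. Hence, whenever the thermal value $\Ic(\tau_{(t-1)/2},\Phi_{\eta,\nu})$ is negative (e.g.\ for large $t$ at any $\eta$ strictly below the thermal threshold), squeezing at fixed determinant \emph{increases} $\Ic$. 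This contradicts the nonpositive derivative in your parameter $\log s$. At fixed input energy the symmetric point is not the maximizer either: a pure squeezed state of the same energy has $\Ic=0$, which beats any negative thermal value. For the same reason, $\sup_{\text{squeezed}}\Ic\le\lim_{N}\Ic(\tau_N,\Phi_{\eta,\nu})$ is false without the positive part. The averaged bound $\tfrac12[\Ic(\tau_{x-1/2})+\Ic(\tau_{y-1/2})]$ is not even defined once a quadrature variance drops below its vacuum value. What is true, and needed, is a bound $\Ic(\rho,\Phi_{\eta,\nu})\le\max\{0,\Ic(\tau_{N'},\Phi_{\eta,\nu})\}$ with $N'$ depending on $\rho$. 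None of your proposed mechanisms produces it.

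\paragraph{Missing idea: compare at fixed output entropy.}
The paper closes this step by comparing inputs at fixed \emph{output entropy}, not at fixed energy or determinant. Fixing $\beta=\sqrt{\det V_B}$ leaves the one-parameter family $1\le t\le y$, with $y=(\beta-c)/\eta$ and $c=(1-\eta)(2\nu+1)$. Its endpoints are a pure input ($t=1$, where $\Ic=0$) and an unsqueezed thermal input ($t=y$, which forces $r=0$). With $p=t^2$, the squared symplectic eigenvalues $q_{1,2}$ of $V_{RB}$ have sum $\beta^2+(1-2\eta)p+2\eta$ and product $\beta^2+(c^2-\eta^2)(p-1)$, both affine in $p$. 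Write the single-mode entropy as $s(x)=\gfun((x-1)/2)$ and use $s(\sqrt q)=\frac{1}{2\ln2}\int_0^1\ln\frac{q-u^2}{1-u^2}\,du$. The entropy exchange becomes $\frac{1}{2\ln2}\int_0^1\ln\frac{(q_1-u^2)(q_2-u^2)}{(1-u^2)^2}\,du$. For each $u$, the product $(q_1-u^2)(q_2-u^2)$ is a positive affine function of $p$. The entropy exchange is therefore concave in $p$ and minimized at an endpoint, which gives exactly $\Ic\le\max\{0,\Ic(\tau_{(y-1)/2},\Phi_{\eta,\nu})\}$.

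\paragraph{Two smaller points.}
First, your formula for $\Ic$ carries a spurious $+f(\nu+\tfrac12)$ term. With $R_E$ purifying the thermal environment, $RBE'R_E$ is pure, so $\Ic=f(\sqrt{\det V_B})-f(d_1)-f(d_2)$; at $\eta=1$ your expression would give $S(\rho)+\gfun(\nu)$. Second, what you need from Br\'adler is not monotonicity in $N$. That cannot hold strictly below the threshold, since $\Ic(\tau_0,\Phi_{\eta,\nu})=0$ while the infinite-energy limit is negative. What you need is that the thermal supremum is either $0$ or the infinite-energy limit.
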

The proof can be found in the SM below. We refer to the quantity in
Eq.~\eqref{eq:gaussianbenchmark} as the \emph{thermal-state lower bound}.
This has been the best-known lower bound on the quantum capacity
$Q(\Phi_{\eta,\nu})$ since 1999~\cite{HolevoWerner2001}, and it vanishes
whenever the transmissivity falls below the thermal threshold $\eta\leq\eta_{\rm G}(\nu)$. Whether this bound could be improved has
remained a long-standing open question; here, we answer it affirmatively.
Below, we focus on the threshold point
\begin{equation}
 \eta=\eta_{\rm G}(1)=\frac45,
 \qquad
 \nu=1,
 \label{eq:point}
\end{equation}
where Eq.~\eqref{eq:gaussianbenchmark} gives
\begin{equation}
 \sup_{\rho\in\Gset}
 \Ic(\rho,\Phi_{4/5,1})
 =
 0.
 \label{eq:gausszero}
\end{equation}
Therefore, a single non-Gaussian input with strictly positive coherent
information is sufficient to disprove Eq.~\eqref{eq:conjecture}.
\paragraph{Main result.—}
Consider the two orthogonal states
\begin{equation}
 \ket e=\frac{5\ket0-4\ket4+2\ket8}{\sqrt{45}},
 \qquad
 \ket o=\frac{6\ket1-5\ket5+2\ket9}{\sqrt{65}},
 \label{eq:codewords}
\end{equation}
and the rank-two mixture
\begin{equation}
 \rho_\star=\frac13\proj e+\frac23\proj o\,,
 \label{eq:witness}
\end{equation}
which is a non-Gaussian state.

To state the certified bound, define the physical cutoff-erasure channel $\T_M$.  If
$P_M=\sum_{n=0}^{M}\proj n$ and $\ket\perp$ is orthogonal to the retained Fock space, then
\begin{equation}
 \T_M(X)=P_MXP_M+\Tr[(\mathbf 1-P_M)X]\proj\perp,
 \label{eq:cutoff}
\end{equation}
where the flag records that the photon number exceeded the cutoff.

\begin{figure*}[t]
 \centering
 \includegraphics[width=0.89\textwidth]{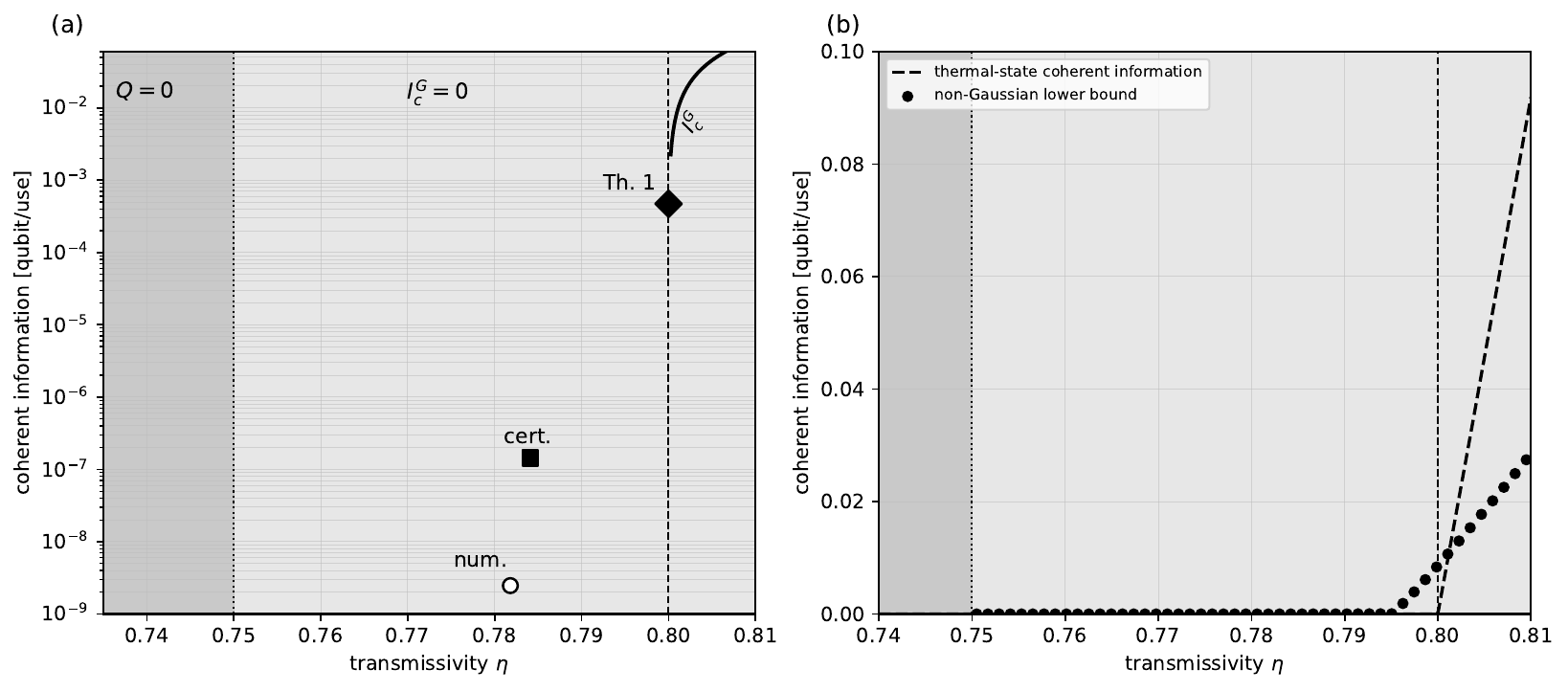}
 \caption{Lower bounds on the quantum capacity of the thermal attenuator
$\Phi_{\eta,\nu}$ for $\nu=1$ and the non-Gaussian advantage
of single-mode ansätze. In both panels, the darker shaded region is the antidegradable region
 $\eta\leq\eta_{\rm AD}(1)=0.75$, where the quantum capacity vanishes.
 \textbf{(a)} Overview on a logarithmic scale.  The dotted
 vertical line is the exact antidegradability threshold, and the dashed
 vertical line marks the thermal threshold $\eta_{\rm G}(1)=0.8$.
 The black curve is the thermal-state lower bound in Eq.~\eqref{eq:gaussianbenchmark}.  The diamond is the certified witness from Theorem~\ref{thm:main}; the square is the independently certified point in Eq.~\eqref{eq:eta07841}; the open circle is the stable, but
 uncertified, numerical point at $\eta=0.7818$.
 \textbf{(b)} Lower bound on the quantum capacity on a linear scale.  The dashed curve is the thermal-state lower bound in
 Eq.~\eqref{eq:gaussianbenchmark}.  Filled circles indicate the best values found at each selected
transmissivity $\eta$ by optimizing over two non-Gaussian ansatz
families: rank-two states supported on modulo-$4$ Fock basis and
rank-three states supported on modulo-$5$ Fock basis.  }
 \label{fig:threshold}
\end{figure*}

Set $R_\star\coloneqq 4.7\times10^{-4}$.
\begin{theorem}[Non-Gaussian separation]\label{thm:main}
For the thermal attenuator $\Phi_{4/5,1}$, the state $\rho_\star$
defined in Eq.~\eqref{eq:witness} satisfies
\begin{equation}
\begin{aligned}
 \Ic(\rho_\star,\Phi_{4/5,1})
 &\geq \Ic(\rho_\star,\T_{11}\!\circ\Phi_{4/5,1})
 \geq R_\star \\
 &> \sup_{\rho\in\Gset}\Ic(\rho,\Phi_{4/5,1})=0.
\end{aligned}
 \label{eq:mainineq}
\end{equation}
In particular, $Q(\Phi_{4/5,1})\geq R_\star>0$.
\end{theorem}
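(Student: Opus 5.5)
The chain in Eq.~\eqref{eq:mainineq} has four links, and I would treat them separately. The last equality is Eq.~\eqref{eq:gausszero}, i.e.\ Lemma~\ref{lem:gaussianbenchmark} evaluated at $\eta=4/5$, $\nu=1$: $\log_2 4-\gfun(1)=2-2=0$. The strict inequality $R_\star>0$ is trivial. The final claim $Q\geq R_\star$ follows from Eq.~\eqref{eq:capacity} by restricting to product inputs $\rho_\star^{\otimes n}$. Coherent information is additive on such inputs, so $Q(\Phi_{4/5,1})\geq \Ic(\rho_\star,\Phi_{4/5,1})$. This leaves two genuine steps: the data-processing step and the certified numerical lower bound.

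\textbf{First inequality.} I would realize $\T_{11}$ through a Stinespring isometry and compare complementary channels. Write $\Ic(\rho,\mathcal N)=S(B)-S(E)$ for a purification $\psi_{RA}$ of $\rho$, so that $\Ic = -S(R|B)$, equivalently $I(R\rangle B)$. Post-composing the channel by any channel acting on $B$ can only decrease $I(R\rangle B)$; this is the data-processing inequality for coherent information. Applying it with $\T_{11}$ gives $\Ic(\rho_\star,\Phi)\geq\Ic(\rho_\star,\T_{11}\circ\Phi)$. Since $\rho_\star$ has finite energy, both entropies are finite and the inequality is legitimate in infinite dimensions. The right-hand side is now a quantity on the finite-dimensional output space of dimension $13$, namely Fock levels $0,\dots,11$ plus the flag $\ket\perp$.

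\textbf{Second inequality.} This requires computing $S(\T_{11}\circ\Phi(\rho_\star))$ and the entropy of the corresponding complementary output. The complementary output lives on the reference $R$ together with the joint environment. Equivalently, I compute $-S(R|B')$ from the state $(\mathrm{id}_R\otimes\T_{11}\circ\Phi)(\psi_{RA})$, where $R$ is a qubit because $\rho_\star$ has rank two. This is a $2\times13=26$-dimensional state, so $\Ic=S(B')-S(RB')$.

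The matrix elements of $\Phi_{\eta,\nu}$ in the Fock basis are explicit. One writes $\Phi_{\eta,\nu}$ as a pure-loss channel $\Phi_{\eta',0}$ followed by a quantum-limited amplifier, and both stages have closed-form Kraus operators in the Fock basis. The amplifier, however, has infinitely many Kraus operators. I would therefore truncate the environment-photon sum at some level $K$ and bound the discarded weight analytically, using geometric tails of the thermal distribution with ratio $\nu/(\nu+1)=1/2$. The result is that the exact $26\times26$ matrix is within trace distance $\varepsilon$ of a computed matrix. The computed matrix is evaluated in interval or rational arithmetic, with rigorous eigenvalue enclosures via Gershgorin or Weyl bounds.

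Continuity of entropy on a finite-dimensional space then converts $\varepsilon$ into an entropy error. For $S(B')$ I would use Fannes--Audenaert with dimension $13$; for $S(R|B')$ I would use the Alicki--Fannes--Winter bound with $\log 2$ dependence. The target is a certified value $\geq R_\star=4.7\times10^{-4}$.

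The main obstacle is this last step. The margin $4.7\times10^{-4}$ is small, so the truncation error $\varepsilon$ must be driven down to roughly $10^{-6}$ or below. Otherwise the $\varepsilon\log d+h(\varepsilon)$ terms swamp it, since $h(\varepsilon)$ decays only like $\varepsilon\log(1/\varepsilon)$. I would first try to avoid continuity bounds altogether. Because the truncated Kraus map is completely positive, the computed matrix is a sub-normalized operator dominated by the true one. A monotonicity or concavity argument might then give a one-sided bound directly, with the remainder controlled by a tiny explicit tail. Failing that, I would take $K$ large, around $40$, to make $\varepsilon$ negligible.
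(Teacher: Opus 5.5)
Your proposal is correct and follows the paper's outline: data processing through $\T_{11}$, the loss--amplifier Kraus factorization, interval arithmetic with rigorous eigenvalue enclosures, and additivity on $\rho_\star^{\otimes n}$. It differs in one step, and there the paper's route is simpler, because the truncation you call the main obstacle never arises. Since $A_k$ maps $\ket{n}\mapsto\ket{n+k}$, one has $\bra{m}A_k=0$ for $k>m$. So for $m,m'\leq 11$ the entry $\bra{m}\Phi_{4/5,1}(X)\ket{m'}$ involves only $k\leq 11$, and, because $\rho_\star$ lives on levels $\leq 9$, only $\ell\leq 9$. The flag entries are fixed by trace preservation, $\Tr[(\mathbf 1-P_{11})\Phi(X)]=\Tr X-\Tr[P_{11}\Phi(X)]$; for $\sigma_{RB'}$ the $2\times 2$ flag block is $\psi_R$ minus the corresponding partial trace of the retained block. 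Moreover, $\T_{11}$ creates no flag--Fock coherences. Hence $\sigma_{B'}$ and $\sigma_{RB'}$ are exact finite sums. Equivalently, your truncation at any $K\geq 11$ is already exact once the flag is computed this way. The paper therefore encloses the matrices entrywise with zero truncation error and never needs Fannes--Audenaert or Alicki--Fannes--Winter. It then applies Gershgorin after conjugating by an exactly orthogonal rational Householder product close to the numerical eigenbasis, because in the Fock basis the discs overlap. This yields $[4.70919632,4.70919633]\times10^{-4}$. Your continuity-based route would only pay off when the retained entries are themselves infinite sums, e.g.\ for inputs with infinite Fock support; here it just adds machinery. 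It also comes with an error budget you have miscalibrated. $R_\star$ lies only about $9\times10^{-7}$ below the true value, so $\varepsilon\approx 10^{-6}$, with penalty $h(\varepsilon)\approx 2\times10^{-5}$, would certify positivity but not $\Ic\geq R_\star$. You need $\varepsilon$ of order $10^{-8}$ or smaller. Your fallback $K\approx 40$ does deliver this comfortably, since the amplifier weights decay like $(1/6)^k$ up to polynomial factors.
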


\paragraph{Proof.—}
The proof is given in three steps.

\emph{1. The cutoff gives a lower bound.}
The map in Eq.~\eqref{eq:cutoff} is completely positive and trace preserving; for example, it has Kraus operators $K_0=P_M$ and $K_n=\ket\perp\!\bra n$ for $n>M$.
For a purification $\psi_{RA}$ of $\rho_A$, the coherent information is the negative conditional entropy, i.e., $\Ic(\rho,\mathcal N)=-S(R|B)_{(\mathrm{id}\otimes\mathcal N)(\psi)}$.
Processing $B$ through another channel can only increase $S(R|B)$~\cite{WildeBook2017,KhatriWilde2024}, and hence can only decrease $\Ic$.  Therefore
\begin{equation}
 \Ic(\rho,\Phi_{4/5,1})\geq
 \Ic(\rho,\T_{11}\!\circ\Phi_{4/5,1}).
 \label{eq:dataprocessing}
\end{equation}
The right-hand side concerns a finite-dimensional output and the six-dimensional input subspace supporting $\rho_\star$.

\emph{2. The finite matrices are explicit.}
Every thermal attenuator admits the decomposition
\begin{equation}
 \Phi_{\eta,\nu}=\mathcal A_G\circ\mathcal L_\tau,
 \qquad G=1+(1-\eta)\nu,
 \qquad \tau=\eta/G,
 \label{eq:factorization}
\end{equation}
where $\mathcal L_\tau$ is a pure-loss channel and $\mathcal A_G$ is a quantum-limited amplifier (see e.g.~\cite{Rosati2018,NohAlbertJiang2019,MeleLamiGiovannetti2025}).  At the point in Eq.~\eqref{eq:point}, one has $G=6/5$ and $\tau=2/3$.  A convenient Fock-basis Kraus representation for the two quantum-limited factors is~\cite{IvanSabapathySimon2011,MeleLamiGiovannetti2025}
\begin{align}
 L_\ell&=\sum_{n\geq\ell}
 \sqrt{\binom n\ell(1-\tau)^\ell\tau^{n-\ell}}\,
 \ket{n-\ell}\!\bra n,\label{eq:losskraus}\\
 A_k&=\sum_{n\geq0}
 \sqrt{\binom{n+k}{k}\frac{(G-1)^k}{G^{n+k+1}}}\,
 \ket{n+k}\!\bra n.\label{eq:ampkraus}
\end{align}
These formulas follow by taking Fock-basis matrix elements of the beam-splitter and two-mode-squeezing dilations of the pure-loss and quantum-limited amplifier channels, respectively~\cite{IvanSabapathySimon2011,MeleLamiGiovannetti2025}.  Consequently,
\begin{equation}
 \Phi_{\eta,\nu}(X)
 =\sum_{k,\ell \geq 0} A_k^{\vphantom{\dag}} L_\ell^{\vphantom{\dag}} X L_\ell^\dagger A_k^\dagger.
 \label{eq:thermal-kraus-composition}
\end{equation}
Because $\rho_\star$ has finite Fock support and $\T_{11}$ retains only the output levels $0,\ldots,11$, together with the erasure flag, every retained matrix element in Eq.~\eqref{eq:thermal-kraus-composition} is a finite sum.   


\emph{3. Evaluation of the finite-dimensional entropies.}
Let
\begin{equation}
 \ket\Psi_{RA}
 =
 \frac1{\sqrt3}\ket0_R\ket e_A
 +
 \sqrt{\frac23}\ket1_R\ket o_A,
\end{equation}
and define
\begin{align}\label{eq_sigma}
 \sigma_{B'}&=(\T_{11}\!\circ\Phi_{4/5,1})(\rho_\star),\notag\\
 \sigma_{RB'}&=\bigl(\mathrm{id}_R\otimes(\T_{11}\!\circ\Phi_{4/5,1})\bigr)(\proj\Psi).
\end{align}
The cutoff output has dimension $13$, i.e., the Fock levels
$0,\ldots,11$ together with the erasure flag, so $\sigma_{B'}$ and $\sigma_{RB'}$ are,
respectively, $13\times13$ and $26\times26$ density matrices.  
A direct numerical calculation gives $S(\sigma_{B'})\simeq 2.54351$, $S(\sigma_{RB'})\simeq 2.54304$, and hence
\begin{equation}
 \Ic(\rho_\star,\T_{11}\!\circ\Phi_{4/5,1})
 \simeq 4.7\times10^{-4}.
 \label{eq:numerical-main-witness}
\end{equation}
This calculation identifies a positive gap, but a numerical calculation alone does not constitute a proof.
\hyperref[app:certification]{Appendix A} gives a verified interval calculation of
the same two entropies and rigorously proves that the exact coherent information is at
least $R_\star$.  Together with Eq.~\eqref{eq:dataprocessing}, this proves
Eq.~\eqref{eq:mainineq}. Finally, the one-use coherent information is an
achievable rate by the Lloyd--Shor--Devetak theorem~\cite{WildeBook2017,KhatriWilde2024}, and therefore
$Q(\Phi_{4/5,1})\geq R_\star$. \hfill$\square$

\begin{figure}[!t]
 \centering
 \includegraphics[width=\columnwidth]{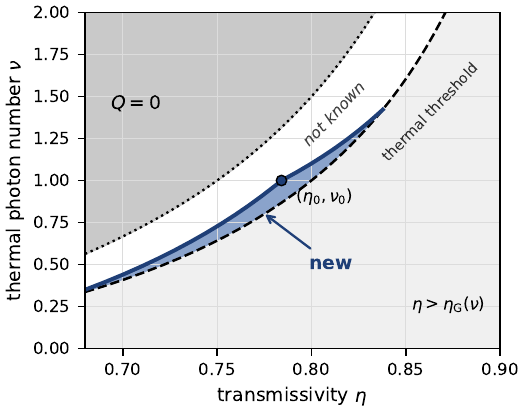}
\caption{Parameter region $(\eta,\nu)$ where the quantum capacity is known to be positive. The blue region is the new region where $Q>0$, beyond what was previously known. Specifically, the blue dot marks the certified point $(\eta_0,\nu_0)=(0.7841,1)$, where, using a non-Gaussian input state, we prove that $Q(\Phi_{\eta_0,\nu_0})>0$; see Eq.~\eqref{eq:eta07841}. The blue shaded region contains the channels $\Phi_{\eta,\nu}$ from which the certified channel can be obtained by pre- and post-processing,
$\Phi_{\eta_0,\nu_0}=\mathcal{G}_2\circ\Phi_{\eta,\nu}\circ\mathcal{G}_1$~\cite{Kianvash2024}. Data processing therefore gives
$Q(\Phi_{\eta,\nu})\geq Q(\Phi_{\eta_0,\nu_0})>0$.
Only the portion of this region not already covered by the thermal-state lower bound is shaded. The dashed curve is the thermal threshold $\eta=\eta_{\rm G}(\nu)$: for $\eta>\eta_{\rm G}(\nu)$, the thermal-state lower bound is positive and thus $Q>0$. The dotted curve is the antidegradability threshold $\eta=\eta_{\rm AD}(\nu)$: for $\eta\leq\eta_{\rm AD}(\nu)$, the channel is antidegradable and $Q=0$. In the remaining white region between the two thresholds, whether $Q>0$ is unknown. Similar families of non-Gaussian input states can further enlarge the region where $Q>0$; however, our numerical investigations indicate that the resulting region remains well separated from the antidegradability threshold, and we do not report this analysis here.}
 \label{fig:highground}
\end{figure}

\paragraph{How far does the effect extend?—}
Theorem~\ref{thm:main} uses a particularly simple non-Gaussian
state for which the coherent information is provably strictly positive
at the thermal threshold. To illustrate how far this witness is from
numerical optimality, Fig.~\ref{fig:threshold}(b) shows an optimization
over related families of non-Gaussian states, reaching approximately
$8.4 \times10^{-3}$ qubits per channel use at the thermal threshold $\eta=0.8$.

We can also obtain a stronger rigorous statement.  At
$\eta_0\coloneqq 0.7841$ and $\nu=1$, the non-Gaussian state given in \hyperref[app:threshold]{Appendix B} is certified to satisfy
\begin{equation}
 \Ic(\rho_0,\T_{50}\!\circ\Phi_{\eta_0,1})
 \geq 1.4312\times10^{-7}>0.
 \label{eq:eta07841}
\end{equation}
Since the quantum capacity $Q(\Phi_{\eta,1})$ is monotonically increasing in the transmissivity $\eta$ (as shown e.g.~in \cite{Lami2020}), we obtain
\begin{equation}
 Q(\Phi_{\eta,1})\geq Q(\Phi_{\eta_0,1})
 \geq 1.4312\times10^{-7}
 \label{eq:uniform-capacity-bound}
\end{equation}
for all $\eta\ge \eta_0$. 

On the opposite side, $\eta_{\rm AD}(\nu)$ in Eq.~\eqref{eq:antidegradability-threshold} is the exact antidegradability threshold of the thermal attenuator: the quantum capacity of $\Phi_{\eta,\nu}$ vanishes in the region $\eta\leq\eta_{\rm AD}(\nu)$.  In particular, $\eta_{\rm AD}(1)=0.75$. For $\nu=1$, let
\begin{equation}
 \eta_c\coloneqq \inf\{\eta:Q(\Phi_{\eta,1})>0\}.
\end{equation}
Our certificates imply $0.75 \leq\eta_c\leq0.7841$ rigorously. A numerical search gives a stable positive value
$2.4\times10^{-9}$ at $\eta=0.7818$, suggesting
$\eta_c\lesssim0.7818$, but that point has not yet been rigorously certified.

The same residue-class construction and cutoff-erasure certification can be applied to other values of the environmental photon number.  For every
$\nu>0$, the natural region in which to search for a non-Gaussian advantage is $\eta_{\rm AD}(\nu)<\eta\leq\eta_{\rm G}(\nu)$, where $\eta_{\rm G}(\nu)$ is the thermal Gaussian threshold defined in Eq.~\eqref{eq:gaussian-threshold}.  In this region the channel is no longer antidegradable, while the thermal-state lower bound in Eq.~\eqref{eq:gaussianbenchmark} is still zero.  The
same optimization and certification strategy can therefore be used for
$\nu\neq1$, but we do not report that two-parameter analysis here. It is worth noting that, even without repeating the certification, positivity at the single point
$(\eta_0,\nu_0)=(0.7841,1)$ already propagates to a two-dimensional region of
the $(\eta,\nu)$ plane: whenever $\Phi_{\eta_0,\nu_0}$ can be obtained from
$\Phi_{\eta,\nu}$ by concatenation with other phase-insensitive Gaussian
channels, data processing gives
$Q(\Phi_{\eta,\nu})\geq Q(\Phi_{\eta_0,\nu_0})>0$.  The set
$\Hset_{\eta_0,\nu_0}$ of such channels is known in closed
form~\cite{Kianvash2024}; we recall it in the Supplemental Material and plot in
Fig.~\ref{fig:highground} the part of it that is not already covered by the
thermal-state lower bound.  That part is nonempty, so our certificate
establishes $Q>0$ on an entire region of parameter space where, to our
knowledge, positivity was not known before.

\paragraph{Discussion.—}
In this work, we have shown that non-Gaussian inputs can strictly improve
the long-standing thermal-state lower bound on the quantum capacity of the
thermal attenuator, first obtained in the seminal work of Holevo and Werner~\cite{HolevoWerner2001}.
We have also proved that this bound coincides with the supremum of the
coherent information over all single-mode Gaussian inputs. Taken together,
these results establish a non-Gaussian separation in the optimization of the
coherent information and give a negative answer to the Gaussian-optimizer conjecture in
Eq.~\eqref{eq:conjecture}, which was left unresolved by the
Holevo--Werner calculation~\cite{HolevoWerner2001}. As a consequence, our
results extend the known noise region in which reliable bosonic quantum
communication is possible beyond the thermal threshold, namely even to certain transmissivity
values $\eta\leq\eta_{\rm G}(\nu)$ for which the thermal-state lower bound vanishes.

The central message of our work is that a complete theory of noisy bosonic
quantum communication cannot be restricted to single-mode Gaussian states.
Determining the quantum capacity now provably requires either difficult
single-mode non-Gaussian optimizations or multimode ansätze.

\section{AI-assisted research statement}
ChatGPT has been truly useful for this work. Specifically, some of the authors tried quite hard to find such a non-Gaussian counterexample several years ago, but ultimately failed. We tried again with ChatGPT 5.5 and still failed. Then ChatGPT 5.6 Sol came out, and it gave us these families of non-Gaussian counterexamples. Although AI has been crucial to this work, the authors take full responsibility for its content.

\appendix

\section{Appendix A: Computer-assisted certification of
Theorem~\ref{thm:main}}\label{app:certification}

We now turn the numerical evaluation in
Eq.~\eqref{eq:numerical-main-witness} into a rigorous lower bound. The following is a standard method for turning numerical bounds involving eigenvalues into mathematically rigorous bounds, but we provide details because, to our knowledge, it has never appeared in the literature on quantum capacities.

The point is that standard floating-point diagonalization provides highly accurate
approximations to the relevant entropies in Eq.~\eqref{eq:numerical-main-witness}, but does not by itself control all rounding errors. Since the conclusion of Theorem~\ref{thm:main} depends on
the strict positivity of the coherent information, we enclose every quantity entering the calculation in a certified real interval. The argument involves only the finite matrices $\sigma_{B'}$ and $\sigma_{RB'}$ defined in Eq.~\eqref{eq_sigma}. 

For each real symmetric matrix $A$ entering the entropy
calculation, every entry $A_{ij}$ is given by an explicit finite
expression obtained from the Kraus representation in
Eq.~\eqref{eq:thermal-kraus-composition}. We evaluate each such
expression using interval arithmetic. More precisely, every exact
input quantity is replaced by a certified interval containing it,
and every subsequent arithmetic operation, including square roots,
is performed with directed rounding. This produces intervals
\[
 \mathbf A_{ij}
 =
 [\underline A_{ij},\overline A_{ij}]
\]
such that $A_{ij}\in\mathbf A_{ij}$ for every pair of indices $i,j$. Collecting these entrywise
enclosures defines the interval matrix
$\mathbf A=(\mathbf A_{ij})$.

At each step, the lower endpoint is rounded downward and the upper
endpoint upward. This \emph{outward rounding} guarantees that the
exact result remains inside the computed interval
throughout the calculation~\cite{MooreKearfottCloud2009,Rump2010}.
Repeated operations may enlarge the intervals, but they cannot
exclude the exact value.

We now turn the entrywise enclosure $\mathbf A$ into certified
enclosures for the eigenvalues of the exact matrix $A$. Our tool
for this purpose is Gershgorin's theorem~\cite{Varga2004}. Since the width of the
resulting eigenvalue enclosures is controlled by the magnitude of
the off-diagonal entries, it is advantageous to apply the theorem
in a basis in which $A$ is nearly diagonal. In the original Fock
basis, the off-diagonal entries are too large and the resulting
Gershgorin intervals overlap.

To identify a suitable basis, we first form the midpoint matrix
\[
 A_0\coloneqq\operatorname{mid}\mathbf A,
 \qquad
 (A_0)_{ij}
 =
 \frac{\underline A_{ij}+\overline A_{ij}}{2},
\]
and compute an ordinary floating-point eigendecomposition of
$A_0$. Let $U$ denote the resulting approximate eigenvector
matrix. The matrix $U$ is used only to identify a basis in which
$A$ is expected to be nearly diagonal; it does not enter the
rigorous certificate.

Starting from $U$, we construct an exactly orthogonal matrix $Q$
that closely approximates it using the standard Householder
construction~\cite{GolubVanLoan2013}. For a nonzero vector $v$, define
\begin{equation}
 H(v)=I-\frac{2vv^{\mathsf T}}{v^{\mathsf T}v}.
 \label{eq:rational-householder}
\end{equation}
At each step of the construction, we replace the floating-point
Householder vector by a sufficiently accurate rational approximation.
This yields
\begin{equation}
 Q=H(v_1)\cdots H(v_s),
 \qquad
 v_1,\ldots,v_s\in\mathbb Q^d\setminus\{0\}.
\end{equation}
Since $H(v)^{\mathsf T}H(v)=I$, every factor is exactly orthogonal,
and therefore so is $Q$. Moreover, because each $v_j$ is rational,
all entries of $H(v_j)$, and hence of $Q$, are rational. Thus $Q$
is an exactly orthogonal rational matrix that remains close to the
floating-point eigenvector matrix $U$.

We then define $C\coloneqq Q^{\mathsf T}AQ$. Since $Q$ is exactly orthogonal, $A$ and $C$ have exactly the same
spectrum. The rational approximations used in constructing $Q$
affect only how close $C$ is to diagonal: a poor approximation
may lead to eigenvalue intervals that are too wide, but it cannot
invalidate the certificate. Using outward-rounded interval matrix multiplication, we then compute $\mathbf C=Q^{\mathsf T}\mathbf A Q$, which is guaranteed to contain the exact transformed matrix
$C=Q^{\mathsf T}AQ$ entrywise. 

We next apply Gershgorin's theorem~\cite{Varga2004} to the
transformed interval matrix $\mathbf C$. For each row $i$, define
\begin{equation}
 r_i\coloneqq\sum_{j\neq i}\sup|\mathbf C_{ij}|,
 \qquad
 \Gamma_i\coloneqq\mathbf C_{ii}+[-r_i,r_i],
 \label{eq:gershgorin-interval}
\end{equation}
where
\[
 \sup|\mathbf C_{ij}|
 \coloneqq\max\{|x|:x\in\mathbf C_{ij}\}.
\]
Since $C$ is contained entrywise in $\mathbf C$, its entries satisfy
$|C_{ij}|\leq \sup|\mathbf C_{ij}|$. Gershgorin's theorem therefore
implies that every eigenvalue of $C$, and hence of $A$, lies in
$\bigcup_i\Gamma_i$. In our computation, interval arithmetic verifies
that the intervals $\Gamma_i$ are pairwise disjoint. The counting
statement in Gershgorin's theorem then implies that each $\Gamma_i$
contains exactly one eigenvalue, counted with multiplicity. Hence every
eigenvalue of $A$ is enclosed in an individual certified interval.

Some certified eigenvalue intervals may extend slightly below zero because
of rounding. Since $\sigma_{B'}$ and $\sigma_{RB'}$ are density matrices,
their exact eigenvalues lie in $[0,1]$, so we first replace each enclosure
$[a,b]$ by $[a,b]\cap[0,1]$.

For each resulting interval, we bound the entropy contribution
$h(\lambda)=-\lambda\log_2\lambda$, with $h(0)=0$, using
outward-rounded interval arithmetic for all operations, including
the logarithm. Since $h$ increases on $[0,e^{-1}]$ and decreases on
$[e^{-1},1]$, its range over any interval is determined by the endpoints
and, if the interval contains $e^{-1}$, by $h(e^{-1})$. In particular,
$\lambda\in[0,\varepsilon]$ with $\varepsilon\leq e^{-1}$ implies
$h(\lambda)\in[0,h(\varepsilon)]$. Thus every eigenvalue contributes a
certified interval to the entropy, including those that are zero or nearly
zero.

Summing these individual enclosures gives
\begin{equation}
 S(\sigma_{B'})\in[S_B^-,S_B^+],
 \qquad
 S(\sigma_{RB'})\in[S_{RB}^-,S_{RB}^+].
\end{equation}
Since the coherent information is the difference of the two entropies, its
smallest possible value is obtained by subtracting the largest allowed
value of $S(\sigma_{RB'})$ from the smallest allowed value of $S(\sigma_{B'})$. Similarly,
its largest possible value is obtained by subtracting the smallest allowed
value of $S(\sigma_{RB'})$ from the largest allowed value of $S(\sigma_{B'})$. Hence
\begin{equation}
 \Ic(\rho_\star,\T_{11}\!\circ\Phi_{4/5,1})
 \in
 [S_B^- - S_{RB}^+,\,S_B^+ - S_{RB}^-].
\end{equation}

Following this procedure, we obtain
\begin{align}
 \Ic(\rho_\star,\T_{11}\!\circ\Phi_{4/5,1})
 \in{}&
 [4.70919632,\notag\\[-1mm]
 &\phantom{[}4.70919633]\times10^{-4},
 \label{eq:interval}
\end{align}
where both endpoints have been rounded outward. In particular, the lower
endpoint is strictly larger than $R_\star=4.7\times10^{-4}$. It follows that the exact coherent information satisfies
\[
 \Ic(\rho_\star,\T_{11}\!\circ\Phi_{4/5,1})
 \geq R_\star>0.
\]
This is the rigorous step that upgrades the floating-point evaluation in
Eq.~\eqref{eq:numerical-main-witness} to the statement of
Theorem~\ref{thm:main}.

\section{Appendix B: A second certified witness at \texorpdfstring{$\eta=0.7841$}{eta=0.7841}}\label{app:threshold}
The state used in Eq.~\eqref{eq:eta07841} is
\begin{equation}
 \rho_0=\frac{67}{182}\proj{e_0}+\frac{115}{182}\proj{o_0},
\end{equation}
with
\begin{align}
 \ket{e_0}&=\frac{1}{\sqrt{100008178}}\sum_{j=0}^{9}a_j\ket{4j},\\
 \ket{o_0}&=\frac{1}{\sqrt{100004619}}\sum_{j=0}^{9}b_j\ket{4j+1},
\end{align}
and
\begin{align*}
 (a_j)={}&(313,-717,1627,-3027,4551,\\[-1mm]
 &\hspace{1.4em}-5436,4983,-3270,1324,-220),\\
 (b_j)={}&(361,-895,1938,-3422,4866,\\[-1mm]
 &\hspace{1.4em}-5464,4646,-2754,945,-106).
\end{align*}
Applying the same computer-assisted certification procedure described in
\hyperref[app:certification]{Appendix A}, now with cutoff $M=50$, gives
\begin{equation}
 \Ic(\rho_0,\T_{50}\!\circ\Phi_{0.7841,1})
 \in[1.4312,1.4314]\times10^{-7},
 \label{eq:eta07841-interval}
\end{equation}
which implies Eq.~\eqref{eq:eta07841}.

\bibliography{references_gaussian_optimizer_PRL}

\clearpage
\onecolumngrid
\setcounter{equation}{0}
\renewcommand{\theequation}{S\arabic{equation}}
\renewcommand{\theHequation}{supp.\arabic{equation}}

\section*{Supplemental Material: Gaussian optimization of the coherent information}
\label{sec:gaussian-optimizer-supplement}
Here we prove Lemma~\ref{lem:gaussianbenchmark} of the main text. Building on the results of Holevo and Werner~\cite{HolevoWerner2001} and Br\'adler~\cite{Bradler2015} for input thermal states, our main contribution is to show that the optimization over all single-mode Gaussian states reduces to the optimization over thermal states.

The only nontrivial issue is input squeezing. The idea of the proof below is to fix the output entropy and show that the entropy exchange is a concave function of the squared input symplectic eigenvalue. Its minimum is then attained at an endpoint, corresponding either to a pure input or to a thermal input.

We use the convention in which the vacuum covariance matrix is $I_2$. Standard facts about Gaussian states used below, including Gaussian purifications, symplectic eigenvalues, and entropy formulas, can be found in Refs.~\cite{Weedbrook2012,Serafini2017}.

Specifically, we prove the following lemma.

\begin{lemma}
\label{lem:supp-gaussian-to-thermal}
For every $0<\eta<1$,
\begin{equation}
 \sup_{\rho\in\Gset}\Ic(\rho,\Phi_{\eta,\nu})
 =
 \sup_{N\geq0}\Ic(\tau_N,\Phi_{\eta,\nu})\,,
 \label{eq:supp-gaussian-to-thermal}
\end{equation}
where $\tau_N$ denotes the thermal state with mean photon number $N$.
In particular, using the known optimization over thermal
states~\cite{HolevoWerner2001,Bradler2015}, we obtain
\begin{equation}
 \sup_{\rho\in\Gset}\Ic(\rho,\Phi_{\eta,\nu})
 =
 \left[
 \log_2\frac{\eta}{1-\eta}-\gfun(\nu)
 \right]_+.
 \label{eq:supp-gaussian-benchmark}
\end{equation}
Equation~\eqref{eq:supp-gaussian-benchmark} is precisely
Lemma~\ref{lem:gaussianbenchmark} of the main text.
\end{lemma}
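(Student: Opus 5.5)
We reduce an arbitrary single-mode Gaussian input to the case of zero mean and then to a diagonal covariance matrix parametrized by its symplectic eigenvalue and squeezing. The thermal states are then recovered as an endpoint of a concave family.

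\emph{First reductions.} Displacements commute with $\Phi_{\eta,\nu}$ up to an output displacement, and they leave both $S(\Phi_{\eta,\nu}(\rho))$ and the complementary-output entropy unchanged. So we may take zero-mean states. The channel is phase-covariant, so a phase rotation lets us take the input covariance matrix to be $V=\mathrm{diag}(x,y)$ with $xy=\lambda^2$. Here $\lambda\ge1$ is the input symplectic eigenvalue, and we write $x=\lambda s$, $y=\lambda/s$. The output covariance matrix is $\eta V+(1-\eta)(2\nu+1)I_2$. Its symplectic eigenvalue is
\[
\mu_B=\sqrt{(\eta x+c)(\eta y+c)}, \qquad c\coloneqq(1-\eta)(2\nu+1).
\]

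\emph{Complementary entropy.} We purify the environment thermal state with a mode $F$ and purify the input with a reference $R$. The entropy exchange $S(\mathcal N^{\rm c}(\rho))$ then equals $S(RB)$, the entropy of the reference and output. This is the entropy of a two-mode Gaussian state whose covariance matrix is built from the two-mode-squeezed purification of $V$ mixed through the beam splitter with the thermal noise. Its two symplectic eigenvalues are fixed by two invariants: $\det V_{RB}$ and the seralian $\Delta_{RB}$. Both can be written explicitly in terms of $x$, $y$, $\lambda$, $\eta$ and $\nu$. In the squeezed parametrization they depend on $s$ only through the combination $x+y$.

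\emph{Concavity step.} Following the strategy announced above, we fix the output entropy, i.e.\ fix $\mu_B$. This constrains $(\eta x+c)(\eta y+c)$, and hence relates $x+y$ to $\lambda^2=xy$ linearly:
\[
\eta c(x+y)=\mu_B^2-c^2-\eta^2\lambda^2 .
\]
Substituting this, $\det V_{RB}$ and $\Delta_{RB}$ become affine functions of $t\coloneqq\lambda^2$. We then have to show that $S(RB)=f(\nu_1)+f(\nu_2)$ is concave in $t$ along this line. Here $f$ is the bosonic entropy function and $\nu_{1,2}^2$ are the roots of $z^2-\Delta z+\det=0$.

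If concavity holds, the minimum of $S(RB)$ over the admissible interval of $t$ is attained at an endpoint. The admissible interval is bounded by $t=1$ (a pure input) and by the largest $t$ compatible with $x,y$ real, $x+y\ge2\lambda$, which is $s=1$ (a thermal input). Hence, for each fixed output entropy, $I_c=S(B)-S(RB)$ is maximized either by a pure Gaussian input or by a thermal one.

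\emph{Main obstacle.} We expect the concavity of $t\mapsto f(\nu_1(t))+f(\nu_2(t))$ with affine invariants to be the hard step. We would first try to write the sum as a function $F(\Delta,\det)$ and check the sign of its second derivative along the direction $(\Delta'(t),\det'(t))$. This uses monotonicity and concavity of $f(\sqrt{\cdot})$ together with the specific slopes coming from the channel. If a direct sign check is messy, we would instead use the representation $f(\sqrt z)$ as an integral of concave elementary functions.

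\emph{Pure-input endpoint.} It remains to show that this endpoint never beats thermal inputs. For a pure squeezed input, $S(RB)$ equals the entropy of the environment output, and a direct comparison shows it does no better than the thermal input with the same $\mu_B$. Alternatively, we can use that a pure input with $\lambda=1$ is itself the $N=0$ thermal state when $s=1$, together with monotonicity in $s$ at $t=1$.

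This yields \eqref{eq:supp-gaussian-to-thermal}. Combined with the Holevo--Werner formula and Br\'adler's result that the thermal supremum is attained as $N\to\infty$, it gives \eqref{eq:supp-gaussian-benchmark}.
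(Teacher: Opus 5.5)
\textbf{Genuine gap: the concavity step.} Your outline is the same as the paper's. You reduce to centered inputs with covariance $\lambda\,\mathrm{diag}(s,1/s)$ and fix the output symplectic eigenvalue. Then the seralian $\Delta$ and $D\coloneqq\det V_{RB}$ become affine in $t=\lambda^2$, and concavity puts the minimum of $S(RB)$ at the pure endpoint $t=1$ or the thermal endpoint $s=1$.

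The problem is that the concavity of $t\mapsto f(\nu_1)+f(\nu_2)$ is the whole nontrivial content of the lemma. You flag it as the main obstacle and leave it conditional (``If concavity holds''), and neither of your two suggestions proves it. Applying an integral representation to each term separately does not work. The quantities $\nu_{1,2}^2=\tfrac12\bigl(\Delta\pm\sqrt{\Delta^2-4D}\bigr)$ are not affine in $t$, so concavity of $z\mapsto f(\sqrt z)$ does not by itself give concavity in $t$.

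The missing step is to use
\[
 f(\sqrt z)=\frac{1}{2\ln2}\int_0^1\ln\frac{z-u^2}{1-u^2}\,du,
 \qquad z\geq1,
\]
which follows by integrating $\frac{d}{dz}f(\sqrt z)=\frac{1}{2\ln2}\int_0^1\frac{du}{z-u^2}$ from $z=1$. Then \emph{add the two terms first}. The sum is $\frac{1}{2\ln2}\int_0^1\ln\frac{(\nu_1^2-u^2)(\nu_2^2-u^2)}{(1-u^2)^2}\,du$. Here $(\nu_1^2-u^2)(\nu_2^2-u^2)=D-u^2\Delta+u^4$ is a positive affine function of the invariants, and hence of $t$. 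The logarithm of a positive affine function is concave. Integrating in $u$, with a limit $u\uparrow1$ for the improper integral, gives concavity of $S(RB)$ on the admissible interval. This makes your $F(\Delta,D)$ concave along every segment of physical invariants, so no sign check using the channel-specific slopes is needed.

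\textbf{Pure endpoint.} Your treatment here is incorrect as stated. A pure input does not in general do ``no better than the thermal input with the same $\mu_B$''. Every pure input has $\Ic=0$ exactly, because the reference decouples and $S(RB)=S(B)$. The thermal input with the same output entropy can instead have strictly negative coherent information, e.g.\ for $\eta<\eta_{\rm G}(\nu)$ and large $\mu_B$.

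What the endpoint argument actually gives is $\Ic(\rho,\Phi_{\eta,\nu})\le\max\{0,\Ic(\tau_{N'},\Phi_{\eta,\nu})\}$, where $\tau_{N'}$ is the thermal state at the $s=1$ endpoint. The $0$ is matched by the vacuum $\tau_0$. No monotonicity in $s$ is needed, since $\Ic\equiv0$ along $t=1$.

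\textbf{Closed form.} The cited thermal result is not that the supremum over $\tau_N$ is always attained as $N\to\infty$. It is that this supremum equals the positive part of the $N\to\infty$ limit: vacuum versus infinite energy for $1/2<\eta<1$, and antidegradability for $\eta\le1/2$.
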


\begin{proof}
The proof is given in three steps.
\paragraph{1. A general single-mode Gaussian input.—}
The thermal attenuator is covariant under phase-space displacements and phase rotations.  Consequently, changing the first moments or rotating the input changes the receiver and reference--receiver states only by output unitaries and does not affect the coherent information.  Every single-mode Gaussian input can therefore be taken to be centered, with covariance matrix
\begin{equation}
 V_A=tS_r^2,
 \qquad
 S_r=\begin{pmatrix}e^r&0\\0&e^{-r}\end{pmatrix},
 \qquad t\geq1,
 \label{eq:supp-general-input}
\end{equation}
where $t=\sqrt{\det V_A}$ is the symplectic eigenvalue of the input and $r\in\mathbb R$ is its squeezing parameter.  The case $t=1$ is pure, whereas $r=0$ is thermal; more precisely, $t=2N+1$ for the thermal state $\tau_N$.

We first assume $0<\eta<1$ and set
\begin{equation}
 c\coloneqq(1-\eta)(2\nu+1)>0.
 \label{eq:supp-c}
\end{equation}
The thermal attenuator acts on covariance matrices as
\begin{equation}
 V_A\longmapsto V_B=\eta V_A+cI_2.
 \label{eq:supp-channel-covariance}
\end{equation}
The output has the single symplectic eigenvalue
\begin{align}
 \beta
 &=\sqrt{\det V_B} =\sqrt{\eta^2t^2+2\eta ct\cosh(2r)+c^2}.
 \label{eq:supp-beta}
\end{align}
Define
\begin{equation}
 s(x)\coloneqq\gfun\!\left(\frac{x-1}{2}\right),
 \qquad x\geq1.
 \label{eq:supp-symplectic-entropy}
\end{equation}
The output entropy is then $S(B)=s(\beta)$.

To calculate the entropy exchange, purify the input with a reference mode $R$.  Starting from a two-mode squeezed purification of the thermal state with symplectic eigenvalue $t$, applying $S_r$ to the input mode, and then applying the channel gives the reference--output covariance matrix
\begin{equation}
 V_{RB}=
 \begin{pmatrix}
  tI_2 & \sqrt{\eta(t^2-1)}\,ZS_r\\
  \sqrt{\eta(t^2-1)}\,S_rZ & \eta tS_r^2+cI_2
 \end{pmatrix},
 \qquad Z=\operatorname{diag}(1,-1).
 \label{eq:supp-RB-covariance}
\end{equation}
Let $\mu_1,\mu_2\geq1$ be its two symplectic eigenvalues.  For a two-mode covariance matrix
$V=\left(\begin{smallmatrix}A&C\\C^{\mathsf T}&B\end{smallmatrix}\right)$,
their squares obey
\begin{equation}
 \mu_1^2+\mu_2^2=\det A+\det B+2\det C,
 \qquad
 \mu_1^2\mu_2^2=\det V.
 \label{eq:supp-two-mode-invariants}
\end{equation}
In our case, $\det A=t^2$, $\det B=\beta^2$, and $\det C=-\eta(t^2-1)$, and therefore
\begin{equation}
 \mu_1^2+\mu_2^2
 =\beta^2+(1-2\eta)t^2+2\eta.
 \label{eq:supp-sum}
\end{equation}
For the product, the Schur-complement identity gives
\begin{align}
 \det V_{RB}
 &=t^2\det\!\left(cI_2+\frac{\eta}{t}S_r^2\right)\notag\\
 &=t^2c^2+2\eta ct\cosh(2r)+\eta^2\notag\\
 &=\beta^2+(c^2-\eta^2)(t^2-1).
 \label{eq:supp-product}
\end{align}
Because a Stinespring dilation makes the joint state of the reference, receiver, and environment pure, the entropy exchange is $S(RB)=s(\mu_1)+s(\mu_2)$.  Thus the coherent information of the arbitrary Gaussian input is
\begin{equation}
 \Ic(\rho,\Phi_{\eta,\nu})
 =s(\beta)-s(\mu_1)-s(\mu_2).
 \label{eq:supp-general-Ic}
\end{equation}
We have made no restriction to input thermal state so far.

\paragraph{2. Fixing the output entropy.—}
We now fix $\beta$, and hence the output entropy $s(\beta)$.  Equation~\eqref{eq:supp-beta} implies
\begin{equation}
 \beta^2-(\eta t+c)^2
 =2\eta ct\,[\cosh(2r)-1]\geq0.
 \label{eq:supp-output-inequality}
\end{equation}
Define
\begin{equation}
 y\coloneqq\frac{\beta-c}{\eta}.
 \label{eq:supp-y}
\end{equation}
Because the fixed value of $\beta$ comes from a physical input with $t\geq1$, Eq.~\eqref{eq:supp-output-inequality} implies $y\geq1$ and
\begin{equation}
 1\leq t\leq y.
 \label{eq:supp-t-interval}
\end{equation}
Conversely, every $t$ in this interval is compatible with the fixed value of $\beta$: choosing $r$ through
\begin{equation}
 \cosh(2r)
 =\frac{\beta^2-\eta^2t^2-c^2}{2\eta ct}
 \label{eq:supp-r-from-t}
\end{equation}
produces a real squeezing parameter because the right-hand side is at least one exactly when $t\leq y$.

Set $p=t^2$ and write $q_j(p)=\mu_j^2$.  At fixed $\beta$, Eqs.~\eqref{eq:supp-sum} and \eqref{eq:supp-product} read
\begin{align}
 q_1(p)+q_2(p)
 &=\Sigma(p)
 \coloneqq\beta^2+(1-2\eta)p+2\eta,
 \label{eq:supp-Sigma}\\
 q_1(p)q_2(p)
 &=\Pi(p)
 \coloneqq\beta^2+(c^2-\eta^2)(p-1).
 \label{eq:supp-Pi}
\end{align}
Both $\Sigma$ and $\Pi$ are affine functions of $p$.  The entropy exchange at fixed output entropy is
\begin{equation}
 E_\beta(p)\coloneqq s\!\left(\sqrt{q_1(p)}\right)
 +s\!\left(\sqrt{q_2(p)}\right),
 \qquad 1\leq p\leq y^2.
 \label{eq:supp-Ebeta}
\end{equation}
We next prove that $E_\beta$ is concave.

Define $f(q)=s(\sqrt q)$ for $q\geq1$.  For $q>1$,
\begin{equation}
 f'(q)
 =\frac{1}{4\sqrt q\ln2}
 \ln\frac{\sqrt q+1}{\sqrt q-1}
 =\frac{1}{2\ln2}\int_0^1\frac{du}{q-u^2}.
 \label{eq:supp-f-derivative}
\end{equation}
Since $f(1)=0$, integration from $1$ to $q$, followed by an exchange of the two nonnegative integrals, yields the continuous representation
\begin{equation}
 f(q)=\frac{1}{2\ln2}\int_0^1
 \ln\!\left(\frac{q-u^2}{1-u^2}\right)du.
 \label{eq:supp-f-integral}
\end{equation}
Using $q_1q_2=\Pi$ and $q_1+q_2=\Sigma$, we obtain
\begin{equation}
 E_\beta(p)
 =\frac{1}{2\ln2}\int_0^1
 \ln\!\left[
 \frac{\Pi(p)-u^2\Sigma(p)+u^4}{(1-u^2)^2}
 \right]du.
 \label{eq:supp-E-integral}
\end{equation}
For every $0\leq u<1$, the numerator is
\begin{equation}
 \Pi(p)-u^2\Sigma(p)+u^4
 =[q_1(p)-u^2][q_2(p)-u^2]>0,
 \label{eq:supp-positive-affine}
\end{equation}
because $q_1(p),q_2(p)\geq1$ for the physical Gaussian state constructed above.  Moreover, each factor $q_j(p)-u^2$ is at least $1-u^2$, so the integrand in Eq.~\eqref{eq:supp-E-integral} is nonnegative.  By Eqs.~\eqref{eq:supp-Sigma} and \eqref{eq:supp-Pi}, the positive numerator is affine in $p$.  The logarithm of a positive affine function is concave; explicitly, if $F(p)>0$ is affine, then
\begin{equation}
 \frac{d^2}{dp^2}\ln F(p)
 =-\frac{[F'(p)]^2}{F(p)^2}\leq0.
 \label{eq:supp-log-affine-concavity}
\end{equation}
For every $\varepsilon\in(0,1)$, integrating this pointwise concavity over
$u\in[0,1-\varepsilon]$ shows that the corresponding truncated integral is
concave in $p$.  Letting $\varepsilon\downarrow0$ preserves the concavity
inequality because the nonnegative integrands converge to the finite
improper integral in Eq.~\eqref{eq:supp-E-integral}.  Hence
\begin{equation}
 E_\beta(p)\text{ is concave on }[1,y^2].
 \label{eq:supp-concavity}
\end{equation}

A concave function on an interval is bounded below by the smaller of its endpoint values.  Thus
\begin{equation}
 E_\beta(p)\geq\min\{E_\beta(1),E_\beta(y^2)\}.
 \label{eq:supp-endpoint-min}
\end{equation}
The two endpoints have a direct physical meaning.

At $p=1$, one has $t=1$, so the input is pure.  Equations~\eqref{eq:supp-Sigma} and \eqref{eq:supp-Pi} give
\begin{equation}
 \{q_1(1),q_2(1)\}=\{\beta^2,1\}.
 \label{eq:supp-pure-endpoint}
\end{equation}
Therefore $E_\beta(1)=s(\beta)$ and the coherent information is zero, as it must be for every pure input.

At $p=y^2$, one has $t=y$ and $\beta=\eta y+c$.  Equality holds in Eq.~\eqref{eq:supp-output-inequality}, and hence $\cosh(2r)=1$.  Since $r$ is real, this means $r=0$.  The second endpoint is therefore the thermal state with covariance matrix $yI_2$, namely $\tau_{(y-1)/2}$.

Combining these endpoint identifications with Eqs.~\eqref{eq:supp-general-Ic} and \eqref{eq:supp-endpoint-min}, every single-mode Gaussian input satisfies
\begin{equation}
 \Ic(\rho,\Phi_{\eta,\nu})
 \leq
 \max\!\left\{
 0,
 \Ic\!\left(\tau_{(y-1)/2},\Phi_{\eta,\nu}\right)
 \right\}.
 \label{eq:supp-pointwise-domination}
\end{equation}
The vacuum $\tau_0$ already belongs to the thermal family and has zero coherent information.  Taking the supremum in Eq.~\eqref{eq:supp-pointwise-domination}, and using the converse inclusion of thermal states among Gaussian states, proves Eq.~\eqref{eq:supp-gaussian-to-thermal}.  We stress that the comparison above fixes the output entropy, not the input energy: the thermal state $\tau_{(y-1)/2}$ may have a different mean photon number from the original squeezed input.  The argument therefore establishes the unconstrained Gaussian optimization considered in the main text; it does not claim thermal optimality under a fixed input-energy constraint.

The boundary cases are equally simple.  For $\eta=0$, the channel replaces every input by $\tau_\nu$ and the reference is left uncorrelated with the output, so $\Ic(\rho,\Phi_{0,\nu})=-S(\rho)\leq0$; both suprema in Eq.~\eqref{eq:supp-gaussian-to-thermal} are zero.  For $\eta=1$, the channel is the identity and $\Ic(\rho,\Phi_{1,\nu})=S(\rho)$; both suprema diverge in the unconstrained-energy limit.

\paragraph{3. Optimization of the thermal family.—}
The remaining optimization over thermal inputs is known. Holevo and
Werner evaluated their coherent information and its infinite-energy
limit~\cite{HolevoWerner2001}. Br\'adler subsequently proved that, for
$1/2<\eta<1$, the supremum over the thermal family is either the zero
value attained by the vacuum or the value approached in the
infinite-energy limit~\cite{Bradler2015}. For $\eta\leq1/2$, the
channel is antidegradable, while the vacuum still gives zero coherent
information~\cite{CarusoGiovannetti2006,CarusoGiovannettiHolevo2006,LamiKhatriAdessoWilde2019}.
Consequently, for $0<\eta<1$,
\begin{equation}
 \sup_{N\geq0}\Ic(\tau_N,\Phi_{\eta,\nu})
 =
 \left[
 \log_2\frac{\eta}{1-\eta}-\gfun(\nu)
 \right]_+.
 \label{eq:supp-thermal-supremum}
\end{equation}
Combining this known result with
Eq.~\eqref{eq:supp-gaussian-to-thermal} proves
Eq.~\eqref{eq:supp-gaussian-benchmark}, and hence completes the proof.
\end{proof}

\section*{Supplemental Material: Extension of the positivity region}
\label{sec:high-ground-supplement}

Here we make explicit the region of the $(\eta,\nu)$ plane on which
Eq.~\eqref{eq:eta07841} implies positivity of the quantum capacity, as plotted
in Fig.~\ref{fig:highground} of the main text.  We use
Eq.~\eqref{eq:eta07841} rather than Theorem~\ref{thm:main} because the channel
$\Phi_{4/5,1}$ of Theorem~\ref{thm:main} already belongs to the region obtained
from $\Phi_{\eta_0,\nu_0}$, so that its own region is contained in the latter.

If a channel $\Phi_{\eta_0,\nu_0}$ can be written as a concatenation
$\Phi_{\eta_0,\nu_0}=\Phi_1\circ\Phi_{\eta,\nu}\circ\Phi_2$ with $\Phi_1,\Phi_2$
phase-insensitive Gaussian channels, then the data-processing inequality gives
$Q(\Phi_{\eta,\nu})\geq Q(\Phi_{\eta_0,\nu_0})$.  The set of all such
$(\eta,\nu)$ is the \emph{high-ground} region $\Hset_{\eta_0,\nu_0}$ of
Ref.~\cite{Kianvash2024}, determined there in closed form.  For an attenuator
reference with $0\leq\eta_0\leq1$ and $(1-\eta_0)\nu_0\leq\min\{1,\eta_0\}$, and
restricting to attenuators, Eqs.~(35), (43), (44) and (46) of
Ref.~\cite{Kianvash2024} give
\begin{equation}
 \Hset_{\eta_0,\nu_0}\cap\{\eta\leq1\}=
 \left\{(\eta,\nu):\ \nu\geq0,\
 \begin{aligned}
  &\nu\leq\tfrac{1-\eta_0}{\eta_0}\tfrac{\eta}{1-\eta}\,\nu_0,
  &&\eta_0\leq\eta\leq1,\\
  &\nu\leq\tfrac{(1-\eta_0)(\nu_0+1)}{1-\eta}-1,
  &&0\leq\eta\leq\eta_0,
 \end{aligned}\right\}.
 \label{eq:supp-high-ground}
\end{equation}
Applying Eq.~\eqref{eq:supp-high-ground} to the certified point
$(\eta_0,\nu_0)=(0.7841,1)$ of Eq.~\eqref{eq:eta07841} yields the region drawn
in Fig.~\ref{fig:highground}.  Its boundary crosses twice the boundary of the
region where the thermal-state lower bound of
Eq.~\eqref{eq:gaussianbenchmark} is already positive, at $\eta\simeq0.665$ and
at $\eta\simeq0.838$; between these two transmissivities it reaches larger
$\nu$, which is the band highlighted in the figure.  Outside that window the
high-ground region of the single channel $\Phi_{\eta_0,\nu_0}$ no longer
contains the thermal-state one.  This is not a limitation of the construction:
every channel belongs to its own high-ground set, so taking the high-ground
sets of all channels with a positive thermal-state bound trivially returns at
least that whole region.  In any case $Q>0$ holds on $\Hset_{\eta_0,\nu_0}$ by
Eq.~\eqref{eq:eta07841}, and on the thermal-state region by
Eq.~\eqref{eq:gaussianbenchmark}, hence on their union.

\end{document}